
\documentclass[12pt]{article}
\usepackage{amsmath}
\usepackage{amssymb}
\usepackage[onehalfspacing]{setspace}

\setcounter{MaxMatrixCols}{10}

\newtheorem{theorem}{Theorem}

\newtheorem{definition}{Definition}
\newtheorem{example}{Example}

\newtheorem{lemma}{Lemma}

\newenvironment{proof}[1][Proof]{\noindent \textbf{#1.} }{\  \rule{0.5em}{0.5em}}

\setlength{\headheight}{0cm}  
\setlength{\headsep}{0cm}  
\setlength{\topskip}{0cm}  
\setlength{\textwidth}{16cm}  
\setlength{\textheight}{20cm}  
\setlength{\oddsidemargin}{0cm}  
\setlength{\evensidemargin}{0cm}
\begin{document}

\author{Mario V\'{a}zquez Corte\thanks{%
Department of Economics, ITAM}}
\title{A Model of Choice with Minimal Compromise\thanks{%
This work draws from my work under the supervision of Levent \"{U}lk\"{u}, who erroneously appeared as a coauthor in a previous draft. 
}}
\date{This version: July 2016}
\maketitle

\begin{abstract}
I formulate and characterize the following two-stage choice behavior. The
decision maker is endowed with two preferences. She shortlists all maximal
alternatives according to the first preference. If the first preference is
decisive, in the sense that it shortlists a unique alternative, then that
alternative is the choice. If multiple alternatives are shortlisted, then,
in a second stage, the second preference vetoes its minimal alternative in
the shortlist, and the remaining members of the shortlist form the choice
set. Only the final choice set is observable. I assume that the first
preference is a weak order and the second is a linear order. Hence the
shortlist is fully rationalizable but one of its members can drop out in the
second stage, leading to bounded rational behavior. Given the asymmetric
roles played by the underlying binary relations, the consequent behavior
exhibits a minimal compromise between two preferences.
To our knowledge it is the first Choice function that satisfies Sen's $\beta$ axiom of choice, but not $\alpha$.

\noindent \textbf{J.E.L. codes:} D0.\newline
\noindent \textbf{Keywords: }Bounded Rationality, Multiple Preferences,
Two-Stage Choice, Shortlisting, Altruism.

{\footnotesize \ }\vfill 
\end{abstract}

\bigskip

\section{Introduction}

The standard model of rational choice centers around a decision maker (DM)
who maximizes a given preference in every menu. Experimental evidence and
field data contain robust deviations from this model. The accumulation of
such evidence has created an interest in developing new models of bounded
rationality which rely on a richer set of psychological variables.

A particularly prominent idea which has been explored in this literature is
that the DM might use multiple preferences in making choices. In the
presence of multiple preferences, any conflict which may arise between
preferences need to be resolved before making choices.\footnote{%
Of course, conflicts between criteria need not be resolved and the DM may
choose \emph{any} alternative which is the best according to \emph{some}
preference. The resulting behavior is characterized by the famous Path
Independence axiom. See, for instance, Moulin (1985).} Recent work has
studied various ways of resolving such conflicts, mainly by explicitly
attributing different roles to different preferences. In Manzini and
Mariotti (2007) and Bajraj and \"{U}lk\"{u} (2015), for example, the DM uses
one preference to identify a shortlist of viable alternatives and a second
preference to choose from the shortlist\footnote{%
This is further discussed in Horan (2016), and Garc\'{\i}a-Sanz and Alcatud
(2015). Both analyze the two-stage procedure inspired by Manzini and
Mariotti (2007). }.

In this work I will study a model which, similarly, features a compromise
between two preferences. In my model, the DM will choose to maximize a
preference, with the proviso that, in case multiple alternatives are
maximal, a second preference will be able to veto an alternative. To be
precise, the model works as follows. The DM is endowed with a weak order (a
utility) and a linear order (a utility where no distinct alternatives are
indifferent.) The DM first shortlists all best alternatives in the weak
order. If a unique alternative is shortlisted, it is chosen. Otherwise, in a
second stage, she eliminates from the shortlist the worst alternative
according to the linear order. The remaining alternatives form the choice
set.

In view of the very asymmetric roles played by the two underlying
preferences, this model features an idea of a \emph{minimal} compromise.
Note that the linear order plays no role if the weak order is decisive in
the first round. Only if the weak order shortlists exactly two candidates,
does the linear order make the choice in the second round. If more than two
are shortlisted, the linear order can only veto one of them. Hence the
departure from the maximization of the first stage preference as a result of
a conflict with the second preference is, in some sense, minimal.

In terms of behavior, this model can explain violations of Sen's $\alpha $
axiom, however it necessarily satisfies various other rationality axioms
such as $\gamma $, $\beta $ and No Binary Cycles (NBC). My main result is a
characterization of this model using five novel conditions.

My model is largely inspired by Manzini and Mariotti (2007). They study a
two-stage choice procedure which depends on two asymmetric binary relations.
In the first stage, the DM forms a shortlist consisting of all maximal
alternatives according to the first binary relation. In the second stage she
chooses from the shortlist using the second binary relation. They show that
this two-stage procedure explains cyclical behavior whereby $x$ is chosen
over $y$, $y$ over $z$ and $z$ over $x$. The main difference between the
present work and Manzini and Mariotti (2007) is that I study a choice
correspondence, while they characterize a choice function, which is a
restrictive form of a choice correspondence which selects a unique
alternative in every menu. Furthermore, in my model the role of the second
binary relation is different. Instead of choosing its best-preferred
alternative from the shortlist, it vetoes the choice of its least preferred
shortlisted alternative. I should note that, as does the related two-stage
model of Bajraj and \"{U}lk\"{u} (2015), my model fails to account for
cyclical behavior. Instead, my model can explain violations of $\alpha $,
whereby an alternative drops out of the choice set in a smaller menu which
contains it.\\

\section{Interpretation}

The Choice with Minimal Compromise can be interpreted as a two different agents choosing over a menu. Imagine you are in a restaurant with your significant other and decide to order pizza. You read the menu and enumerate your favorite options: Hawaiian, supreme, veggie and 4 cheeses. Your significant other then says he doesn't want to eat supreme. So you compromise and decide to choose a pizza from the remaining three alternatives.\\ 

Formally, the firs agent is represented by the preference relation $R$ shortlist her best alternatives. Then asks the second agent, to take out his least favorite option from the shortlist encoded by the preference relation $L$. then, she proceeds to choose any item from the remaining shortlist.

\section{Model}

I consider a standard choice environment. Let $X$ be a finite set of
alternatives. A binary relation $R$ on $X$ is (1) \emph{complete} if for all 
$x,y\in X$, either $xRy$ or $yRx$, (2) \emph{transitive} if for all $%
x,y,z\in X$, if $xRy$ and $yRz$, then $xRz$, and (3) \emph{antisymmetric} if
for all $x,y\in X$, if $xRy$ and $yRx$, then $x=y$. A \emph{weak order} is a
complete and transitive binary relation. A \emph{linear order} is an
antisymmetric weak order. We will typically refer to weak orders by $R$ and
linear orders by $L$. For any weak order $R$, I will denote by $I$ and $P$ the symmetric and asymmetric parts of $R$, respectively: $xIy\Leftrightarrow xRy$ and $yRx$; and $xPy\Leftrightarrow xRy$ and $\neg (yRx)$.

A \emph{menu} is any nonempty subset of $X$. $2^{X}=\{A\subseteq
X:A\not=\varnothing \}$ denotes the set of menus. If $R$ is a weak order,
let $\max (A,R)$ denote the set of maximal alternatives in $A$ according to $%
R$, in other words, $\max (A,R)=\{x\in A:xRy$ for all $y\in A\}$. If $L$ is
a linear order, then $\max (A,L)$ is a singleton, as is $\min (A,L)=\{x\in
A:yLx$ for all $y\in A\}$. In this case, I will refer to the unique
alternative in $\max (A,L)$ (resp. $\min (A,L)$) by $\max (A,L)$ (resp. $%
\min (A,L)$) as well.

A\ \emph{choice correspondence} is a map $c:2^{X}\rightarrow 2^{X}$
satisfying $c(A)\subseteq A$ for every menu $A$. If $c(A)=\{x\}$ for some
menu $A$ and some alternative $x\in A$, I will say that $c$ is \emph{decisive%
} at $A$. A choice correspondence $c$ is \emph{rational} if there exists a
weak order $R$ such that $c(A)=\max (A,R)$ for every menu $A$. A choice
correspondence $c$ satisfies the \emph{weak axiom of revealed preference}
(WARP) if for every $x,y,A$ and $B$, if $x,y\in A\cap B$, $x\in c(A)$ and $%
y\in c(B)$, then $x\in c(B)$ as well. It is well known that WARP is a
necessary and sufficient condition for the rationality of choice
correspondences. (See for instance Moulin, 1985.)

I can now define the class of choice correspondences of interest in this
work.\bigskip

\begin{definition}
\emph{A choice correspondence }$c$\emph{\ admits a }minimal compromise\emph{%
\ representation if there exist a weak order }$R$\emph{\ and a linear order }%
$L$\emph{\ such that for every menu }$A$%
\begin{equation*}
c(A)=\left\{ 
\begin{tabular}{ll}
$\max (A,R)$ & \emph{if }$\max (A,R)$\emph{\ is a singleton,} \\ 
$\max (A,R)\backslash \min (\max (A,R),L)$ & \emph{otherwise.}%
\end{tabular}%
\right.
\end{equation*}%
\emph{If this is the case, I will call }$c$\emph{\ an MC\ choice
correspondence.}
\end{definition}

Hence a choice correspondence $c$ with a minimal compromise representation
operates in two stages. In the first stage maximal alternatives according to 
$R$ are shortlisted. In the second stage the choice is made from the
shortlist using $L$ as follows. If the shortlist contains only one
alternative, then it is the choice, and $L$ has no role to play. If the
shortlist contains multiple alternatives, however, the alternative which is
the worst according to $L$ is eliminated. All remaining alternatives form
the choice set. Hence the compromise between $R$ and $L$ is minimal, in the
sense that $L$ can veto only one alternative from the shortlist, if the
shortlist contains two or more alternatives.

MC choice correspondences can usefully explain failures of WARP. Consider
the following two axioms:\bigskip

\noindent $\alpha $: If $x\in c(A)$ and $x\in B\subset A$, then $x\in c(B)$%
.\bigskip

\noindent $\beta $: If $x,y\in A\subset B,$ $x,y\in c(A),$ and $y\in c(B),$
then $x\in c(B)$ as well.\bigskip

\noindent \textbf{Fact 1:} (Moulin, 1985) A\ choice correspondence satisfies
WARP if and only if it satisfies $\alpha $ and $\beta $.\bigskip

My first result indicates that MC choice correspondences can fail $\alpha $,
but they have to satisfy $\beta $.

\begin{lemma}
Let $c$ admit a minimal compromise representation. Then $c$ satisfies $\beta 
$. However $c$ may fail $\alpha $.
\end{lemma}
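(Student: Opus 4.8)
The plan is to prove the two claims separately. For the failure of $\alpha$, I would simply exhibit a concrete counterexample. Take $X=\{x,y,z\}$ with a weak order $R$ under which all three alternatives are indifferent, and any linear order $L$, say $xLyLz$. Then $\max(X,R)=\{x,y,z\}$ is not a singleton, so $c(X)=\{x,y,z\}\setminus\min(\{x,y,z\},L)=\{x,y\}$. Now consider the submenu $B=\{x,z\}$ with $z\in B\subset X$ and $x\in c(X)$. Since $\max(B,R)=\{x,z\}$ is not a singleton, $c(B)=\{x,z\}\setminus\{z\}=\{x\}$, so $z\notin c(B)$ even though $z\in c(X)$; this violates $\alpha$. (Alternatively one checks $x\in c(X)$ but examines a menu where $x$ is vetoed.) The point is that an alternative surviving the veto in a large shortlist can become the minimal element, and hence be vetoed, in a smaller shortlist.

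For the positive claim that $c$ satisfies $\beta$, I would argue directly from the definition. Suppose $x,y\in A\subset B$, $x,y\in c(A)$, and $y\in c(B)$; I must show $x\in c(B)$. The first step is to record what membership in a choice set entails in this model: if $z\in c(S)$ then $z\in\max(S,R)$, since in both branches of the definition the choice set is a subset of $\max(S,R)$. So from $x,y\in c(A)$ I get $x,y\in\max(A,R)$, and from $y\in c(B)$ I get $y\in\max(B,R)$. Because $R$ is a weak order (in particular transitive and complete) and $x\in A\subset B$ with $xIy$ forced by $x,y\in\max(A,R)$, I can promote $x$ into $\max(B,R)$: indeed $yRw$ for all $w\in B$ and $xRy$ give $xRw$ for all $w\in B$, so $x\in\max(B,R)$.

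It then remains to show $x$ is not the alternative vetoed by $L$ in the menu $B$. The key observation is that both $x$ and $y$ lie in $\max(B,R)$, so if $\max(B,R)$ is a singleton it equals $\{x\}=\{y\}$ and $x\in c(B)$ trivially; otherwise $c(B)=\max(B,R)\setminus\{\min(\max(B,R),L)\}$ and the single vetoed element is $\min(\max(B,R),L)$. Since $y\in c(B)$, that vetoed element is not $y$. Because $L$ is a linear order, each menu has a unique $L$-minimal element, so at most one element of $\max(B,R)$ is removed; hence every element of $\max(B,R)$ other than the unique $\min$ survives. As $x\neq\min(\max(B,R),L)$ whenever $x\neq y$ — which I would need to verify — $x$ survives as well, giving $x\in c(B)$.

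The main obstacle is exactly this last verification: ruling out that $x$ is the $L$-minimal, and hence vetoed, element of $\max(B,R)$. The crux is that only one element can be vetoed from any shortlist, so I must leverage the hypothesis $x,y\in c(A)$ to pin down $x$'s status. Here the fact that $x,y\in c(A)$ with $|c(A)|\geq 2$ forces $\max(A,R)$ to be non-singleton and shows neither $x$ nor $y$ is the $L$-minimal element of $\max(A,R)$; I would then use that $\max(A,R)\subseteq\max(B,R)$ together with the uniqueness of the $L$-minimal element to argue $x$ cannot be $L$-below everything in the larger shortlist while $y$ survives. Making the comparison between $\min(\max(A,R),L)$ and $\min(\max(B,R),L)$ precise, and handling the boundary case where $c(A)=\{x\}=\{y\}$ is a singleton, is the delicate part of the argument.
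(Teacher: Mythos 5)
There are two genuine problems here. First, your $\alpha$ counterexample is broken as written. With $R$ making $x,y,z$ all indifferent and $xLyLz$, you correctly compute $c(X)=\{x,y\}$, but you then claim a violation because ``$z\notin c(B)$ even though $z\in c(X)$'' --- yet $z\notin c(X)$: $z$ is precisely the element vetoed in $X$. So the pair consisting of $z$ and the submenu $\{x,z\}$ witnesses nothing. In this example the violation must be exhibited with $y$: $y\in c(\{x,y,z\})=\{x,y\}$ while $c(\{x,y\})=\{x\}$, because the vetoed alternative changes from $z$ (in the triple) to $y$ (in the pair). Your parenthetical fallback (``examine a menu where $x$ is vetoed'') cannot be repaired either, since $x$ is $L$-maximal and is never vetoed in any menu. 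Your setup is exactly the paper's example; you simply pointed at the wrong alternative and the wrong submenu.

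Second, your $\beta$ argument stops exactly at the step that carries all the content, and you say so yourself: ruling out $x=\min(\max(B,R),L)$. The paper closes this as follows. Since $x\neq y$ and $x,y\in c(A)$, the shortlist $\max(A,R)$ is not a singleton, so its $L$-minimum, call it $z$, is vetoed in $A$ and is therefore a third alternative distinct from both $x$ and $y$, with $xLz$. Since $z\in\max(A,R)$ we have $zRy$, and since $y\in\max(B,R)$, transitivity of $R$ gives $z\in\max(B,R)$; hence $x$ lies $L$-above an element of $\max(B,R)$ and cannot be its $L$-minimum. (Alternatively, your own sketch closes contrapositively: if $x$ were the $L$-minimum of $\max(B,R)$, then, since $x\in\max(A,R)\subseteq\max(B,R)$, $x$ would a fortiori be the $L$-minimum of $\max(A,R)$, contradicting $x\in c(A)$.) Either way the missing inference is two lines, but it is the heart of the lemma, and leaving it as ``the delicate part I would need to make precise'' means the proof is not done. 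Note also that the ``boundary case'' you worry about, $c(A)=\{x\}=\{y\}$, is just $x=y$, where $x\in c(B)$ is immediate from the hypothesis $y\in c(B)$, so no care is needed there. As submitted, the proposal neither exhibits a valid $\alpha$ violation nor completes the $\beta$ argument.
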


\begin{proof}
Let $c$ admit a minimal compromise representation. Fix $x,y\in $ $%
A,B\subseteq X$. Suppose $x,y\in c(A)$, $A\subset B$ and $y\in c(B).$ I need
to show that $x\in c(B).$ If $x=y$ the condition is automatically satisfied
since $y\in c(B).$ Suppose $x\neq y.$ Since $y\in c(B)$, $y\in \max (B,R)$
and therefore $yRb$ for all $b\in B.$ The fact that $x\in c(A)$ implies that 
$x\in \max (A,R)$, so $xRy$ since $y$ belongs to the set $A.$ Now
transitivity of $R$ gives $xRb$ for all $b\in B,$ and therefore $x\in \max
(B,R).$ Notice that $\max (B,R)$ can not be a singleton since $x$ and $y$
are different, therefore $c(B)=\max (B,R)\backslash \min (\max (B,R),L).$ I
will now show that $x\neq \min (\max (B,R),L).$ By hypothesis $x,y\in c(A)$
so $\max (A,R)$ is not a singleton, recall $x\neq y,$ which implies $%
c(A)=\max (A,R)\backslash \min (\max (A,R),L)$. Hence there exists some $z$
distinct from $x$ and $y$, such that $z=\min (\max (A,R),L)$. Hence $xLz.$
Since $z\in \max (A,R)$, $zRy$. Since $y$ belongs to the set $A,$ then, by
transitivity of $R,$ $zRb$ for all $b\in B$ which shows that $z\in \max
(B,R) $. Hence $x$ can not be the $L$-worst alternative in the shortlist $%
\max (B,R)$.

The following example shows that $c$ may fail $\alpha $. Let $c$ admit a
minimal compromise representation. Additionally, let $X=\{x,y,z\}$, with $R$
ranking all three alternatives indifferent and $xLyLz.$ The consequent MC
choices are given by the following table:%
\begin{equation*}
\begin{tabular}{ccccc}
$A$ & $\{x,y\}$ & $\{x,z\}$ & $\{y,z\}$ & $\{x,y,z\}$ \\ 
$c(A)$ & $\{x\}$ & $\{x\}$ & $\{y\}$ & $\{x,y\}$%
\end{tabular}%
\end{equation*}%
Note $y$ is chosen from the menu $\{x,y,z\},$ but not from the menu $\{x,y\}$
where it belongs, leading to a failure of $\alpha .$ This happens because
the vetoed alternative is different between the two menus: $y$ is the $L$%
-worst alternative in the shortlist $\{x,y\}=\max (\{x,y\},R),$ but $z$ is
the $L$-worst alternative in the shortlist $\{x,y,z\}=\max (\{x,y,z\},R).$%
\bigskip
\end{proof}

There is another sense in which the deviation of MC choice correspondences
from full rationality comes in the form of $\alpha $ failures. Consider the
following two axioms:\bigskip

\noindent $\gamma $: If $x\in c(A)\cap c(B)$, then $x\in c(A\cup B)$.\bigskip

\noindent No binary cycles (NBC): If $x\in c(\{x,y\})$ and $y\in c(\{y,z\})$%
, then $x\in c(\{x,z\})$.\bigskip

\noindent \textbf{Fact 2}: (Moulin, 1985) A choice correspondence satisfies
WARP if and only if it satisfies $\alpha ,\gamma $ and NBC.\bigskip

My next result indicates that MC choice correspondences satisfy $\gamma $
and NBC. Hence in view of the particular characterization given in Fact 2,
MC choice correspondences fail to be rational only because they may fail $%
\alpha $.

\begin{lemma}
Let $c$ admit a minimal compromise representation. Then $c$ satisfies $%
\gamma $ and NBC.
\end{lemma}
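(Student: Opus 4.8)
The plan is to treat the two axioms separately, in each case reducing the behavior of $c$ to a clean structural description of the shortlist and the veto.

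For $\gamma$, suppose $x\in c(A)\cap c(B)$. First I would record that membership in $c(\cdot)$ forces membership in the relevant shortlist, so $x\in\max(A,R)\cap\max(B,R)$; since $R$ is transitive this gives $xRd$ for every $d\in A\cup B$, hence $x\in\max(A\cup B,R)$. The key observation I would establish next is the identity
\[
\max(A\cup B,R)=\max(A,R)\cup\max(B,R),
\]
which holds precisely because $x$ is a common maximal element: every element of $\max(A\cup B,R)$ is $R$-indifferent to $x$, so $\max(A\cup B,R)$ equals the intersection of the $R$-indifference class of $x$ with $A\cup B$, and this intersection distributes over the union, with $[x]\cap A=\max(A,R)$ and $[x]\cap B=\max(B,R)$. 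Without a common maximal element this identity fails, which is why the hypothesis $x\in c(A)\cap c(B)$ is doing real work.

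With $M:=\max(A\cup B,R)$ in hand, if $M$ is a singleton it must be $\{x\}$ and we are done. Otherwise $c(A\cup B)=M\setminus\min(M,L)$, so it suffices to show $x\neq\min(M,L)$. I would argue by contradiction: if $x=\min(M,L)$, then since $\max(A,R)$ and $\max(B,R)$ are both subsets of $M$ containing $x$, the element $x$ is also the $L$-bottom of each. Now for the menu $A$, either $\max(A,R)=\{x\}$, or $\max(A,R)$ has at least two elements and then $c(A)=\max(A,R)\setminus\{x\}$, contradicting $x\in c(A)$. So $\max(A,R)=\{x\}$, and symmetrically $\max(B,R)=\{x\}$; but then the union identity forces $M=\{x\}$, contradicting that $M$ is not a singleton. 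Hence $x\neq\min(M,L)$ and $x\in c(A\cup B)$.

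For NBC, I would first note that on every two-element menu $c$ is decisive: if the two elements are $R$-comparable the strictly $R$-better one is chosen, and if they are $R$-indifferent the shortlist has two elements and $L$ vetoes the $L$-worse one, again leaving a singleton. This lets me define a tournament $\succ$ by $x\succ y\Leftrightarrow x\in c(\{x,y\})$, which unwinds to the lexicographic rule $x\succ y\Leftrightarrow xPy$ or ($xIy$ and $xLy$). NBC is exactly the statement that $\succ$ is transitive on distinct triples, the degenerate cases where two of $x,y,z$ coincide being immediate. I would verify transitivity by cases according to whether each hypothesis comes from the $P$-part or the ($I$-and-$L$)-part: the two pure cases use transitivity of $P$ and of $I$-together-with-$L$ respectively, while the two mixed cases use transitivity of $R$ together with the asymmetry of $P$ to rule out the reverse comparison. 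Since $\succ$ is transitive, NBC holds.

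The main obstacle is the veto bookkeeping in $\gamma$: the identity $\max(A\cup B,R)=\max(A,R)\cup\max(B,R)$ is what guarantees that escaping the veto in both $A$ and $B$ cannot be reconciled with being vetoed in $A\cup B$, and getting this identity and its dependence on the common maximal element $x$ right is the heart of the argument. The NBC part is routine once binary choices are seen to be governed by the single linear order $\succ$.
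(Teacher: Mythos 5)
Your proof is correct and takes essentially the same route as the paper's: for $\gamma$, both arguments rest on chosen elements being shortlisted, transitivity of $R$ carrying maximality into $A\cup B$, and the fact that $x$'s escape from the veto in $A$ or $B$ protects it in the union --- the paper directly exhibits the vetoed witness $z\in\max(A,R)$ with $xLz$ surviving into $\max(A\cup B,R)$, while you prove the contrapositive, that $x=\min(\max(A\cup B,R),L)$ would force $\max(A,R)=\max(B,R)=\{x\}$ and collapse the union. Your NBC tournament is likewise a systematic repackaging of the paper's two cases ($\neg(zRx)$ versus $zRx$), both amounting to the observation that binary choice is governed lexicographically by $P$ and then $L$.
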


\begin{proof}
Suppose $c$ admits a minimal compromise representation. I will first show
that $c$ satisfies $\gamma $. Let $x\in c(A)\cap c(B)$. Then it must be the
case that $x\in \max (A,R)$ and $x\in \max (B,R),$ which means $x\in \max
(A\cup B,R).$There are two cases: $x$ is the only $R$-maximal element in
both menus or there exists an other $R$-maximal element, distinct from $x,$
in at least one menu. In the first case, $\{x\}=\max (A\cup B,R)$ and $%
c(A\cup B)=\max (A\cup B,R)=\{x\}$ and $x$ is chosen in $A\cup B$ as
desired. In the second case, there exists $y\not=x$ in, say, menu $A$ such
that $y\in \max (A,R)$. Since $yIx$, $y\in \max (A\cup B,R)$ as well. Since $%
x,y\in \max (A,R)$ it must be the case that $c(A)=\max (A,R)\backslash \min
(\max (A,R),L).$ Notice that there exists $z\in \max (A,R)$ such that $xLz,$
since $x\in c(A)$ and therefore $x\neq \min (\max (A,R),L).$ This last
statement means $zRx$ hence $z\in \max (A\cup B,R)$ and $x\neq \min (\max
(A\cup B,R),L)$ and $x\in c(A\cup B)$, as desired.

To show that $c$ satisfies NBC, let $x\in c(\{x,y\})$ and $y\in c(\{y,z\}).$
I have to show that $x\in c(\{x,z\}).$ Since $c$ has a minimal compromise
representation it must be the case that $x\in \max (\{x,y\},R)$ and $y\in
\max (\{y,z\},R),$ so $xRy$ and $yRz.$ Then, by transitivity of $R,$ $xRz$
which means $x\in \max (\{x,z\},R).$ If $\neg (z R x)$ then $c(\{x,z\})=\max
(\{x,z\},R)=\{x\}.$ If $zRx$. then $zRy$ and $yRx$ since $R$ is transitive.
Consequently $\max (\{x,y\},R)=\{x,y\},$ $\max (\{y,z\},R)=\{y,z\}$ and $%
\max (\{x,z\},R)=\{x,z\},$ hence $L$ vetoes an alternative in all three
doubleton menus. Recall $x\in c(\{x,y\})$ and $y\in c(\{y,z\}),$ then $xLy$
and $yLz$, which implies $xLz$ by transitivity of $L$. Hence $%
\{x\}=c(\{x,z\})$ once again. This finishes the proof.
\end{proof}

\section{Revelation of Preferences}

Suppose that $c$ admits a minimal compromise representation with the
underlying preferences $R$ and $L$. How does the resulting behavior reveal $%
R $ and $L$? The following example shows that there may be more than one way
to rationalize observed choices by this model.

\begin{example}
\emph{Consider the following choice correspondence:}%
\begin{equation*}
\begin{tabular}{ccccc}
$A$ & $\{x,y\}$ & $\{x,z\}$ & $\{y,z\}$ & $\{x,y,z\}$ \\ 
$c(A)$ & $\{y\}$ & $\{x\}$ & $\{y\}$ & $\{y\}$%
\end{tabular}%
\end{equation*}%
\emph{Note that this choice correspondence is decisive in every menu.
Furthermore it is rationalizable in the standard sense. One way to see that
this behavior admits a minimal compromise representation is to let }$L$\emph{%
\ be any linear order and take }$R$\emph{\ as follows: }$yPxPz$\emph{, where 
}$P$\emph{\ is the strict part of }$R$\emph{. Alternatively, if I take }$%
xIyPz$\emph{\ for }$R$\emph{, and }$zLyLx$\emph{, the resulting choices are
identical. }$\blacktriangle $
\end{example}

Yet, through violations of $\alpha $, a MC choice correspondence can quite
tractably reveal at least parts of the underlying weak order $R$ and the
linear order $L$. First, note that any chosen alternative must be at least
as good as any other feasible alternative in $R$. Hence $xRy$ if $x\in c(A)$
and $y\in A$ in some menu $A$. This mirrors the revelation of the underlying
preference in a standard rationality framework. However in my model, if $%
x\in c(A)$ and $y\in A\backslash c(A)$, this does not mean that $x$ is
strictly preferred to $y$. This is because $x$ and $y$ are perhaps
indifferent, but $y$ has been vetoed by the linear order $L$. However, any
such $y$ can easily be detected as their removal would impact behavior.
Indeed, if $y$ has been vetoed, then its removal from $A$ may lead to the
veto of the next alternative in $L$, meaning $c(A\backslash y)\not=c(A)$
even though $y\not\in c(A)$. If this happens, then I can also conclude that $%
yRa$ for all $a\in A$. Furthermore I can also conclude that $xLy$ for all $%
x\in c(A)$. However this revelation of $L$ is not complete. Suppose that $R$
shortlists only two alternatives $x$ and $y$ but $y$ is vetoed by $L$. In
this case the removal of $y$ has no impact on behavior as $c(A)=\{x\}$ since 
$y$ is vetoed, but $c(A\backslash y)=\{x\}$ as well, as only $x$ has been
shortlisted, stripping $L$ of its veto power.

The characterization exercise of the next section contains my main result,
Theorem 1, where these revelations play the key role. In it, I define the
binary relation $R$ as follows: $xRy$ iff $y$ belongs to a menu where the
removal of $x~$affects behavior. I also define a binary relation $L$ as
follows: $xLy$ iff $c$ is decisive in the menu $\left\{ x,y\right\} $ in
favor of $x$. I show that, under certain conditions which I will specify, $R$
is complete and transitive, i.e., a weak order. I also show that, again
under conditions, $L$ can be \ completed to a linear order. Furthermore, any 
$c$ satisfying the conditions I will identify behaves identically to a MC
choice correspondence defined by $R$ and $L$.

\section{Characterization}

In this section I\ will characterize the class of MC choice correspondences.
Let me begin with some notation which will help in the statement of two of
my conditions. For any choice correspondence $c$ and any menu $A$, let 
\begin{equation*}
r^{c}(A)=\{x\in A:c(A\backslash x)\not=c(A)\}.
\end{equation*}%
In words, $r^{c}(A)$ collects all members of $A$ whose removal impacts
behavior. Note $r^{c}$ is a choice correspondence itself: $r^{c}(A)\subseteq
A$ and, since $c(A)\subseteq r^{c}(A)$, $r^{c}(A)$ is nonempty. Clearly, the
removal of any chosen element will impact behavior. However the removal of
alternatives which are not chosen may also impact behavior. The following
observation indicates that this never happens if $c$ is rational.

\begin{lemma}
If a choice correspondence $c$ is rational, then $r^{c}(A)=c(A)$ for every
menu $A$.
\end{lemma}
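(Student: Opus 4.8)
The plan is to exploit the inclusion already recorded in the excerpt, namely that $c(A)\subseteq r^{c}(A)$ holds for every menu and every choice correspondence, so that the entire content of the lemma is the reverse inclusion $r^{c}(A)\subseteq c(A)$. Equivalently, I need to show that removing an element that is \emph{not} chosen never alters behavior: if $x\in A$ and $x\notin c(A)$, then $c(A\backslash x)=c(A)$, which forces $x\notin r^{c}(A)$. First I would invoke rationality to fix a weak order $R$ with $c(A)=\max(A,R)$ for all menus $A$, and reduce the goal to the set identity $\max(A\backslash x,R)=\max(A,R)$ whenever $x\in A\backslash\max(A,R)$.

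Next I would prove this identity by two inclusions. The inclusion $\max(A,R)\subseteq\max(A\backslash x,R)$ is immediate: any $z\in\max(A,R)$ satisfies $z\neq x$ (since $x$ is not maximal), so $z\in A\backslash x$, and $z$ still dominates the smaller feasible set $A\backslash x\subseteq A$. The reverse inclusion $\max(A\backslash x,R)\subseteq\max(A,R)$ is where the argument has to do real work, and this is the step I expect to be the main obstacle. Take $z\in\max(A\backslash x,R)$; to conclude $z\in\max(A,R)$ the only thing missing is that $z$ dominates the removed element $x$, i.e., $zRx$.

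To supply $zRx$ I would use completeness together with the hypothesis that $x$ is not $R$-maximal in $A$: there exists $y\in A$ with $\neg(xRy)$, and completeness then gives $yPx$, in particular $y\neq x$, so $y\in A\backslash x$. Since $z$ is maximal in $A\backslash x$ and $y$ lies in $A\backslash x$, we have $zRy$; combining $zRy$ with $yRx$ (which follows from $yPx$) and transitivity of $R$ yields $zRx$. Hence $z$ dominates every element of $A$, so $z\in\max(A,R)$, completing the second inclusion. The two inclusions give $\max(A\backslash x,R)=\max(A,R)$, so $c(A\backslash x)=c(A)$ and $x\notin r^{c}(A)$; together with $c(A)\subseteq r^{c}(A)$ this establishes $r^{c}(A)=c(A)$ for every menu $A$. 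I would also note the degenerate case where $A$ is a singleton needs no separate treatment, since then $c(A)=A$ and no non-chosen $x$ exists, while for a non-maximal $x$ the set $A\backslash x$ is automatically nonempty (the strict dominator $y$ witnesses this), so $c(A\backslash x)$ is well defined throughout.
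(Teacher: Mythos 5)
Your proof is correct and takes essentially the same approach as the paper: both fix the rationalizing weak order $R$, reduce the lemma to showing $\max(A\backslash x,R)=\max(A,R)$ whenever $x\in A\backslash\max(A,R)$, and establish this by the same two inclusions using completeness and transitivity. The only cosmetic difference is in the second inclusion, where you argue directly that any $z\in\max(A\backslash x,R)$ satisfies $zRx$ by chaining $zRy$ and $yPx$ through a strict dominator $y$ of $x$, while the paper instead shows every non-maximal $a\neq x$ keeps a strict dominator distinct from $x$; these are interchangeable micro-arguments for the same step.
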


\begin{proof}
Take a choice correspondence $c$. Suppose $c$ is rational and let $R$ be the
weak order which $c$ maximizes, i.e., $c(A)=\max (A,R)$ for all $A$. I will
show that $r^{c}=c$. By definition, $c(A)\subseteq r^{c}(A)$. Take any $x\in
A\backslash c(A)$. If $a\in \max (A,R)$, then $a\not=x$, $a\in A\backslash x$
and $a\in \max (A\backslash x,R)$ as well. Hence $c(A)\subseteq
c(A\backslash x)$. If $a\not\in \max (A,R)$ and $a\not=x$, on the other
hand, then there exists $a^{\prime }\not=x$ such that $a^{\prime }Pa$ and $%
a\not\in \max (A\backslash x,R)$, giving $c(A\backslash x)\subseteq c(A)$. I
conclude that $a\not\in r^{c}(A)$ and $c(A)=r^{c}(A)$.\bigskip
\end{proof}

The next example show that the statement in the preceding Lemma can not be
reversed, i.e., $r^{c}=c$ is not sufficient for the rationality of $c$.

\begin{example}
\emph{Consider the following choice correspondence on }$X=\{x,y,z\}$\emph{. }%
\begin{equation*}
\begin{tabular}{ccccc}
$A$ & $\{x,y\}$ & $\{x,z\}$ & $\{y,z\}$ & $\{x,y,z\}$ \\ 
$c(A)$ & $\{x,y\}$ & $\{z\}$ & $\{y\}$ & $\{x,y\}$%
\end{tabular}%
\end{equation*}%
\emph{Note }$r^{c}=c$\emph{\ but }$c$\emph{\ is not rational as it fails }$%
\alpha $\emph{: }$x\in c(\{x,y,z\})$\emph{\ but }$x\not\in c(\{x,z\})$\emph{%
. }$\blacktriangle $
\end{example}

I am now ready to state the characterizing conditions for MC choice
correspondences. \bigskip

\noindent \textbf{Condition 1.} If $x,y\in A\subset B,$ $x\in c(A),$ and $%
y\in c(B),$ then $x\in c(B).$\bigskip

This condition says that if an alternative $y$ is chosen in a menu, and $x$
is chosen in a smaller menu where $y$ is present, then $x$ must also be
chosen in the larger menu. This condition strengthens $\beta $ by weakening
its "if-part" as it does not insist that $y$ should be chosen in the smaller
menu $A$ for the conclusion to follow. The following example shows that the
strengthening is strict, as there are choice correspondences which satisfy $%
\beta $ but fail Condition 1.\bigskip 

\begin{example}
\emph{The following choice correspondence satisfies }$\beta $\emph{\ but
fails Condition 1.}%
\begin{equation*}
\begin{tabular}{ccccc}
$A$ & $\{x,y\}$ & $\{x,z\}$ & $\{y,z\}$ & $\{x,y,z\}$ \\ 
$c(A)$ & $\{x\}$ & $\{x\}$ & $\{y\}$ & $\{y\}$%
\end{tabular}%
\end{equation*}%
\emph{Note }$\beta $\emph{\ holds vacuously as }$c$\emph{\ is decisive in
every menu. However Condition 1 fails: }$x,y\in \{x,y\}\subset \{x,y,z\},$%
\emph{\ }$x\in c(\{x,y\}),$\emph{\ and }$y\in c(\{x,y,z\}),$\emph{\ but }$%
x\not\in c(\{x,y,z\}).$ $\blacktriangle $
\end{example}

The next condition says that Condition 1 should hold for the map $r^{c}$
associated with the choice correspondence $c$.\bigskip

\noindent \textbf{Condition 2. }For any $c$, $r^{c}$ satisfies Condition 1:
If $x,y\in A\subset B,$ $x\in r^{c}(A),$ and $y\in r^{c}(B),$ then $x\in
r^{c}(B).$\bigskip

Hence if the removal of $y$ changes behavior in menu $B$, and the removal of 
$x$ changes behavior in a smaller menu $A$ where $y$ belongs, then the
removal of $x$ should change behavior in menu $B$ as well. \bigskip

The next condition says that $c$ should not choose every feasible
alternative, except of course in singletons. \bigskip

\noindent \textbf{Condition 3. }For any nonsingleton menu $A$, there exists $%
x\in A$ such that $x\not\in c(A)$.\bigskip

Note that Condition 3 implies, in particular, that $c$ should be decisive in
doubleton menus. Furthermore, I\ have the following result.\bigskip

\begin{lemma}
If $c$ satisfies Conditions 1 and 3, then it must satisfy NBC as well.
\end{lemma}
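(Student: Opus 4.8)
The plan is to argue by contradiction, exploiting the fact that Condition 3 forces decisiveness on doubletons. First I would record that consequence: for any two-element menu $\{a,b\}$, Condition 3 guarantees some element is unchosen, so $c(\{a,b\})$ is a singleton. Next I would dispose of the degenerate cases in which $x,y,z$ are not all distinct (if $x=z$ the conclusion $x\in c(\{x,z\})=\{x\}$ is immediate; if $x=y$ or $y=z$ the conclusion reduces directly to one of the two hypotheses). So the substantive case has $x,y,z$ pairwise distinct, where the hypotheses of NBC together with doubleton-decisiveness give $c(\{x,y\})=\{x\}$ and $c(\{y,z\})=\{y\}$. Supposing $x\notin c(\{x,z\})$, decisiveness forces $c(\{x,z\})=\{z\}$, so I would be confronting the cyclic pattern $x$ chosen over $y$, $y$ over $z$, and $z$ over $x$, and my goal is to show this pattern is incompatible with Conditions 1 and 3.

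The heart of the argument is to apply Condition 1 three times, each time taking the large menu to be $B=\{x,y,z\}$ and the small menu to be one of the three doubletons. Applying it with $A=\{x,y\}$ (using $x\in c(A)$) yields: if $y\in c(B)$ then $x\in c(B)$. Applying it with $A=\{y,z\}$ (using $y\in c(A)$) yields: if $z\in c(B)$ then $y\in c(B)$. Applying it with $A=\{x,z\}$ (using the contradiction hypothesis $z\in c(A)$) yields: if $x\in c(B)$ then $z\in c(B)$. Chaining these three implications shows that membership of any single one of $x,y,z$ in $c(\{x,y,z\})$ propagates around the cycle and forces all three to lie in $c(\{x,y,z\})$.

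To close the argument I would invoke nonemptiness of $c(B)$: since $c(\{x,y,z\})$ contains at least one of $x,y,z$, the chained implications force $c(\{x,y,z\})=\{x,y,z\}$, which contradicts Condition 3 on the nonsingleton menu $\{x,y,z\}$. Hence the assumption $x\notin c(\{x,z\})$ is untenable, and $x\in c(\{x,z\})$, establishing NBC. I expect the main obstacle to be choosing the right three instantiations of Condition 1 so that their conclusions form a closed cycle of implications; in particular, it is essential that Condition 1 does not require the propagated element to be chosen in the small menu, which is exactly the weakening over $\beta$ noted in the text and is what makes the third application (with $z\in c(\{x,z\})$) available.
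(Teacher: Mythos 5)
Your proof is correct and takes essentially the same route as the paper's: assume the cycle $c(\{x,y\})=\{x\}$, $c(\{y,z\})=\{y\}$, $c(\{x,z\})=\{z\}$ (decisiveness on doubletons coming from Condition 3), apply Condition 1 three times to $B=\{x,y,z\}$ so that membership of any one alternative in $c(B)$ propagates to all three, and then contradict Condition 3 via nonemptiness of $c(B)$. Your treatment of the degenerate cases ($x=y$, $y=z$, $x=z$) is a minor addition of care that the paper's terser proof leaves implicit.
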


\begin{proof}
Suppose $c$ satisfies Conditions 1 and 3, $x\in c(\{x,y\})$ and $y\in
c(\{y,z\})$ but $\{z\}=c(\{x,z\})$. Note, by Condition 3, this means $%
\{x\}=c(\{x,y\})$ and $\{y\}=c(\{x,y\})$. Consider the menu $\{x,y,z\}$.
Since all alternatives in $\{x,y,z\}$ have been chosen in a smaller menu,
Condition 1 implies that if one of them belongs to $c(\{x,y,z\})$, then all
do so. This violates Condition 3. \bigskip
\end{proof}

Next consider the following condition.\bigskip

\noindent \textbf{Condition 4. \ }If $x\in A\backslash c(A)$ and $x\in
c(A\cup \{y\}),$ then $y\notin c(A\cup \{y\}).$\bigskip

Imagine adding a new alternative $y$ to a menu $A$. Condition 4 says that
this can not lead to the inclusion of both $y$ and a previously unchosen
alternative $x$ in the choice set. If $x$ jumps in the choice set as a
result of the inclusion of $y$, then $y$ should not belong to the choice set.

Finally, my last condition is as follows.\bigskip

\noindent \textbf{Condition 5.} For all $A$, for all nonsingleton $%
B\subseteq r^{c}(A)$ there exists some $x\not\in c(B)$ such that $B=c(B)\cup
\{x\}$.\bigskip

Imagine choice in a menu of alternatives all of which impact choice in a
larger menu. Condition 5 says that all but one of these alternatives must
belong to the choice set.

I am now ready to state the main result.

\begin{theorem}
A choice correspondence admits a minimal compromise representation if and
only if it satisfies Conditions 1-5.
\end{theorem}

\begin{proof}
To begin, suppose that $c$ admits a minimal compromise representation and
let $R$ and $L$ be the underlying weak and linear orders.

To see that $c$ satisfies Condition 1, fix $x,y\in $ $A\subset B$ such that $%
x\in c(A)$ and $y\in c(B)$. I need to show that $x\in c(B).$ There is
nothing to show if $x=y$, so suppose $x\not=y$. By definition $x\in \max
(A,R)$ and $y\in \max (B,R)$, hence $x\in \max (B,R)$ and $y\in \max (A,R)$
as well. Since $\max (A,R)$ is not a singleton, there exists some $z\in \max
(A,R)\backslash c(A)$ such that $xLz$. Note $z\in \max (B,R)$ as well, so $%
x\not=\min (\{\max (B,R)\},L)$, giving $x\in c(B)$.

Next take $x,y\in $ $A\subset B$ such that $x\in r^{c}(A)$ and $y\in
r^{c}(B) $. To establish Condition 2, I need to show that $x\in r^{c}(B).$
There is nothing to show if $x=y$ or if $x\in c(B)$. Suppose $x\not=y$ and $%
x\not\in c(B)$. Note that minimal compromise representation implies $%
r^{c}(S)\subseteq \max (S,R)$ for any menu $S$. Hence $x\in \max (A,R)$, $%
y\in \max (B,R)$ and consequently $\max (A,R)\subseteq \max (B,R)$ and $x\in
\max (B,R)$ as well. This means $x$ is vetoed by $L$ in $B$. I need to show
that $c(B)$ contains at least two alternatives so that $x\in r^{c}(B)$.
Suppose $x\in c(A)$. Note $y\in \max (A,R)$ as well, so $\max (A,R)$
contains at least two alternatives. Hence an alternative must be vetoed in $%
A $, say some $a\not=x$. Then $a\in \max (A,R)\subseteq \max (B,R)$ and $xLa$%
, hence $x$ cannot vetoed in $B$, a contradiction. If $x\not\in c(A)$, on
the other hand, then $x$ is vetoed in $A$. However $x\in r^{c}(A)$, meaning $%
c(A) $ contains at least two distinct alternatives $a_{1},a_{2}\in \max
(A,R)\subseteq \max (B,R)$ such that $a_{i}Lx$. Then $a_{1},a_{2}\in c(B)$,
and the removal of $x$ from $B$ will result in a change in behavior, as I\
needed to show. I conclude that $x\in r^{c}(B)$.

To show that $c$ satisfies Condition 3, take a menu $A$ which is not a
singleton and suppose $c(A)=A$. Since $c(A)\subset \max (A,R)$ whenever $%
c(A) $ contains multiple alternatives, $A=c(A)\subset \max (A,R),$ and
impossibility.

Next, to show $c$ satisfies Condition 4, take $x,y$ and $A$ such that $x\in
A\backslash c(A)$ and $x\in c(A\cup \{y\}).$ I have to show that $y\notin
c(A\cup \{y\}).$ By definition of $r^{c}$, $x,y\in r^{c}(A\cup \{y\})$,
which means\textbf{\ }$x,y\in \max (A\cup \{y\},R)$ as well. Hence $x\in
\max (A,R)$ but for all $a\in c(A)$, $aLx$. Now take any $a\in \max (A\cup
\{y\},R)\backslash \{y\}$. Note such $a\in \max (A,R)$ as well. If $yLa$,
then \thinspace $yLx$ also and consequently $x$ is the $L$-worst alternative
in $\max (A\cup \{y\},R)$, meaning $x\not\in c(A\cup \{y\})$, a
contradiction. Hence $aLy$ for all $a\in \max (A\cup \{y\},R)\backslash
\{y\} $ and $y\not\in c(A\cup \{y\})$.

Finally, to see that $c$ satisfies Condition 5 take nonsingleton menus $A$
and $B$ such that $c(A)\not=c(A\backslash x)$ for all $x\in B$. Then $%
B\subseteq \max (A,R)$. Consequently $B=\max (B,R)$. Let $x=\min (B,L)$ so
that $c(B)=B\backslash \{x\}$, and Condition 5 follows.

In the reverse direction, suppose $c$ satisfies Conditions 1-5. Define $xRy$
iff there exists some menu $A$ such that $x\in r^{c}(A)$ and $y\in A$. Also
define $xLy$ iff $\{x\}=c(\{x,y\})$. I will now show that $R$ is a weak
order and $L$ is a linear order.

Completeness of $R$ follows from the definition of a choice correspondence.
Take any $x,y\in X$. If $x=y$, then $xRx$ since $\{x\}=c(\{x\})$. Otherwise
since $c(\{x,y\})\not=\varnothing $, $xRy$ or $yRx$ (or both). To see that $%
R $ is transitive, suppose $xRy$ and $yRz$. Then there exist menus $A$ and $%
B $ such that $x\in r^{c}(A)$ and $y\in r^{c}(B)$, $y\in A$ and $z\in B.$
Take any $w\in r^{c}(A\cup B)$. If $w\in A$, then $x\in r^{c}(A\cup B)$ by
Condition 2. If $w\in B$, then $y\in r^{c}(A\cup B)$ and therefore $x\in
r^{c}(A\cup B)$, again, by Condition 2. Since $z\in A\cup B$, $xRz$ as
desired. This proves that $R$ is a weak order.

To see that $L$ is complete, take any $x,y\in X$. If $x=y$, then $xLx$ as $%
\{x\}=c(\{x\})$. Otherwise Condition 3 implies $\{x\}=c(\{x,y\})$ or $%
\{y\}=c(\{x,y\})$. Hence $xLy$ or $yLx$ and $L$ is complete. Furthermore I
can not have $xLy$ and $yLx$ for distinct $x$ and $y$, hence $L$ is
asymmetric. The transitivity of $L$ follows from Lemma 4 and Condition 3 as
follows. If $xLy$, $yLz$, then Lemma 4 implies $x\in c(\{x,z\})$ and
Condition 3 implies $\{x\}=c(\{x,z\})$. Hence $xLz$ and $L$ is transitive.
This proves that $L$ is a linear order.

Now let $c_{R,L}$ be the MC choice correspondence defined by $R$ and $L$. I\
will show that $c=c_{R,L}$. First suppose that $x\in c(A)$. I need to show
that $x\in c_{R,L}(A)$. Since $c(A)\subseteq r^{c}(A)$, $x\in r^{c}(A)$. By
definition of $R$, then, $x\in \max (A,R)$. There are two cases to consider.
If $\max (A,R)=\{x\}$, then $\max (A,R)=\{x\}=c_{R,L}(A)$ by definition and $%
x\in c_{R,L}(A)$ as desired. Suppose now that $\max (A,R)$ contains multiple
alternatives. I need to show that $x$ is not the $L$-worst alternative in $%
\max (A,R)$. In other words, I need to find an alternative $a\in \max
(A,R)\backslash \{x\}$ such that $c(\{a,x\})=\{x\}$. Let $y\in \max
(A,R)\backslash x$. By Condition 3, $c(\{x,y\})$ is a singleton. If $%
c(\{x,y\})=\{x\}$, then $xLy$ and I\ am done. Suppose that $c(\{x,y\})=\{y\}$%
. Since $x\in c(A)$ and $\{x,y\}\subset A$, as I keep adding alternatives to
menu $\{x,y\}$ to reach menu $A$, $x$ must jump in the choice set at some
point. In other words, there must exist a menu $D$ and an alternative $z\in
A\backslash D$ such that $\{x,y\}\subseteq D\subseteq A$, $x\not\in c(D)$
and $x\in c(D\cup \{z\})$. (If no such $D$ and $z$ exist, then $x\not\in
c(A) $.) Note that $z\in r^{c}(D\cup \{z\})$ and therefore $zRx$. Hence $%
z\in \max (A,R)$ as well. However $z\not\in c(D\cup \{z\})$ by Condition 4.
Now consider the menu $\{x,z\}$. If $\{z\}=c(\{x,z\})$, then Condition 1
dictates that $z\in c(D\cup \{z\})$, a contradiction. Then, by Condition 3, $%
c(\{x,z\})=\{x\}$ and $xLz$. This proves that $x$ is not the $L$-worst in $%
\max (A,R)$. I conclude that $x\in c_{R,L}(A)$.

To finish, take $x\in c_{R,L}(A)$. I need to show that $x\in c(A)$. By
definition $x\in \max (A,R)$. If $\max (A,R)=\{x\}$, there exists no $y\in
A\backslash \{x\}$ such that $yRx$. This implies that $c(A)$ can only
contain $x$. Since $c(A)$ is nonempty, it \emph{has to} contain $x$, as
desired. Now suppose $\max (A,R)\not=\{x\}$, i.e., that there exists some $%
z\not=x$ such that $\{x,z\}\subseteq \max (A,R)$. Since $x\in c_{R,L}(A)$
and $\max (A,R)$ contains multiple alternatives, $x$ is not the $L$-worst
alternative in $\max (A,R)$. Hence I can take $z$ such that $xLz$, i.e., $%
c(\{x,z\})=\{x\}$.

Towards a contradiction suppose $x\not\in c(A)$ and pick $y\in c(A)$. If $%
x\in c(\{x,y\})$, then by Condition 1, $x\in c(A)$ as well, a contradiction.
By Condition 3, then $\{y\}=c(\{x,y\})$.

Since $zRx$ and $zRy$, there exist menus $B_{x}$ and $B_{y}$ such that $x\in
B_{x}$, $y\in B_{y}$ and $z\in r^{c}(B_{x})\cap r^{c}(B_{y})$. By Condition
2, then $z\in r^{c}(B_{x}\cup B_{y})$. Since $c(\{x,z\})=\{x\}$, $x\in
r^{c}(\{x,z\})$ and Condition 2 implies $x\in r^{c}(B_{x}\cup B_{y})$.
Similarly, since $c(\{x,y\})=\{y\}$, $y\in r^{c}(B_{x}\cup B_{y})$.

Now I will use Condition 5. Consider the menu $\{x,y,z\}\subseteq
r^{c}(B_{x}\cup B_{y})$. Conditions 3 and 5 imply that $c(\{x,y,z\})$
contains exactly two alternatives. If $c(\{x,y,z\})=\{y,z\}$, then Condition
1 fails since $c(\{x,z\})=\{x\}$. Similarly if $c(\{x,y,z\})=\{x,z\}$, as $%
\{y\}=c(\{x,y\})$. Hence $c(\{x,y,z\})=\{x,y\}$. Now Condition 1 implies $%
x\in c(A)$, as $y\in c(A)$. This contradiction finishes the proof.
\end{proof}

\section{Conclusion}

In this work I study a two-stage choice correspondence defined by a weak
order $R$ and a linear order $L$. In any menu $A$, first $R$ shortlists its
maximal alternatives and next $L$ vetoes its worst alternative in the
shortlist, provided that the shortlist contains multiple alternatives. Hence
the behavior features a compromise from the maximization of $R$, but this
compromise is minimal, as $L$ can only veto its least preferred alternative.
Only if $R$ shortlists two candidates does $L$ make the choice. If three or
more candidates are shortlisted, then $L$ can only veto a single one of
them. Moreover if only a single alternative is shortlisted, $L$ has no
effect on behavior.

I show that this model satisfies various rationality conditions, most
notably $\beta $, $\gamma $ and no-binary-cycles. However it may fail the
famous $\alpha $ axiom. I provide five novel conditions which together
characterize this model. I\ leave for future work the generalization to the
scenario where the second preference is also a weak order, and could veto
multiple alternatives. \bigskip

\bigskip \bigskip

\end{document}